\setlist[enumerate]{wide=\parindent}
\DeclareMathOperator{\sinc}{sinc}
\newtheorem{lemma}{Lemma}
\begin{document}
	
	\title{Parametric Channel Estimation for LoS Dominated Holographic Massive MIMO Systems}
	\author{Mojtaba Ghermezcheshmeh, Vahid Jamali,\textit{ Member, IEEE}, Haris Gacanin,\textit{ Fellow, IEEE},  and Nikola Zlatanov,\textit{ Member, IEEE}
    \footnote{M. Ghermezcheshmeh, and N. Zlatanov are with the Department of Electrical and Computer Systems Engineering, Monash University, Melbourne, VIC 3800, Australia (Emails: mojtaba.ghermezcheshmeh@monash.edu; nikola.zlatanov@monash.edu).
    
    V. Jamali is with the Department of Electrical Engineering and Information Technology, Technical University of Darmstadt, Darmstadt, 64283, Germany (Email: vahid.jamali@tu-darmstadt.de).
    
    H. Gacanin is with the Faculty of Electrical Engineering and Information Technology, RWTH Aachen University, Aachen, Germany (Email: harisg@ice.rwth-aachen.de).}
	
	}

		\maketitle
\begin{abstract}
Holographic massive multiple-input multiple-output (MIMO), in which a spatially continuous surface is being used for signal transmission and reception, have emerged as a promising solution for improving the coverage and data rate of wireless communication systems.
 To realize these objectives, the acquisition of accurate channel state information in holographic massive MIMO systems is crucial. In this paper, we propose a channel estimation scheme based on a parametric physical channel model for line-of-sight (LoS) dominated communication in millimeter and terahertz wave bands. The proposed channel estimation scheme exploits the specific structure of the radiated beams generated by the continuous surface to estimate the channel parameters in a dominated LoS channel model. The training overhead and computational complexity of the proposed scheme do not scale with the number of antennas. 
The simulation results demonstrate that the proposed estimation scheme significantly outperforms other benchmark schemes in poor scattering environment.


\end{abstract}

\begin{IEEEkeywords}
Channel estimation, holographic massive MIMO, millimeter-wave and terahertz communication, beyond massive MIMO, parametric channel model.
\end{IEEEkeywords}

\section{Introduction}
The millimeter-wave (mmWave) and terahertz (THz) frequency bands will play a pivotal role in next-generation (e.g., beyond 5G) wireless networks since they would be able to provide abundant spectrum resources, higher data rates,
 and lower latency \cite{chaccour2021seven}, \cite{tang2021path}. However, severe path-loss, atmospheric absorption, human blockage, and other environmental obstruction are key challenges for enabling wireless communication in the mmWave and THz bands \cite{wan2021terahertz}, \cite{jamali2020intelligent}. 
 Over the past few years, massive multiple-input multiple-output (MIMO) communication systems, in which base stations (BSs) are equipped with a large antenna array, have been introduced to provide beamforming gains that can overcome the mentioned limitations \cite{wei2014key}, \cite{larsson2014massive}.   
 Since the beamforming gain of massive MIMO increases with the number of antennas, it would be highly desirable to have as many antennas as possible that are compactly arranged 
 \cite{zhu2020stochastic}, \cite{bjornson2019massive}. 
In this direction, spatially continuous apertures with densely deployed antennas, known as holographic massive MIMO \cite{bjornson2019massive} and large intelligent surfaces (LISs) \cite{dardari2020communicating}, were introduced to go beyond massive MIMO systems \cite{huang2020holographic}-\cite{hu2018beyond}. Since these continuous apertures can be seen as an extension of traditional massive MIMO from a discrete array to a continuous surface \cite{hu2018beyond}, they have been referred to as holographic massive MIMO \cite{bjornson2019massive}, \cite{wan2021terahertz}.

The holographic massive MIMO that comprises a massive number of phase-shifting elements with steerable beamforming weight can be used for generating narrow beams with high beamforming gains \cite{shlezinger2021dynamic}, see Fig. \ref{Fig1}. Moreover, in the holographic massive MIMO, multiple radio frequency (RF) chains can be connected to the aperture to enable spatial multiplexing \cite{bjornson2019massive}, \cite{shlezinger2021dynamic}. The existing research works have verified the effect of the holographic massive MIMO systems on enhancing the communication performance in various scenarios under the assumption of the availability of perfect channel state information (CSI)  \cite{jung2021performance}-\cite{hu2018beyond1}. To achieve the full potential of the holographic massive MIMO systems, the acquisition of accurate CSI is a fundamental and challenging task in practice.  Since the holographic massive MIMO consists of a massive number of phase-shifting elements, estimating all the entries of the channel matrix is not feasible in practice since it results in substantial training overhead and computational complexity.




\subsection{Related Work}
 Up to now, a variety of channel estimation schemes have been proposed for the massive MIMO communication
systems, including exhaustive search \cite{dai2006efficient}, hierarchical search \cite{xiao2016hierarchical}-\cite{chen2018beam}, and compressed sensing (CS) \cite{lee2016channel}-\cite{sanchez2021gridless}.
The authors in \cite{dai2006efficient} propose the exhaustive search algorithm as a straightforward approach where the transmitter and receiver scan all possible angular directions to find the best pair of AoA and AoD.
 However, the training overhead of the exhaustive search approach is prohibitively high, especially when a large number of antennas generate narrow beams in the mmWave and THz frequency bands. 
 To improve the efficiency of the exhaustive search, the hierarchical search based on a predefined codebook was proposed in \cite{xiao2016hierarchical}-\cite{chen2018beam}. In the first stage of the hierarchical search, the codewords with larger beam widths are used to scan the entire angular domain. Then, in the second stage, codewords with narrower beam widths are used to scan only a specific range obtained at the first stage. 
 In \cite{xiao2016hierarchical}, a hierarchical codebook
is designed, where sub-array
and deactivation antenna processing techniques were exploited to generate the codebook via closed-form expressions. However, turning off some antennas may not be a good approach since the reduced array gain has an undesirable effect on the performance \cite{zhang2017codebook}.
The authors in \cite{noh2017multi} design  multi-resolution beamforming sequences to quickly search out the dominant channel direction. Reference \cite{chen2018beam} proposes a beam training method based on dynamic hierarchical codebook to estimate the mmWave massive MIMO channel
with multi-path components.
However, the accuracy of the hierarchical search is limited by the codebook size.
Moreover, all these hierarchical schemes  may incur high training overhead and system latency because they require non-trivial coordination among the transmitter and the receiver.

At mmWave frequency, typically only a few dominant path components contribute to the received power, i.e., mm-wave channels are sparse in the angular domain. Thereby, different compressed sensing (CS) methods exploit the sparsity of the mmWave channels to estimate the channel with relatively few measurements despite a large number of phase-shifting elements \cite{lee2016channel}-\cite{sanchez2021gridless}.
In the CS-based channel estimation, first a set of random training sequences is used to measure the channel, and then a sparse recovery algorithm is employed to obtain the path parameters. The CS-based channel estimation schemes can be classified into two main categories: on-grid and gridless estimation.
The on-grid schemes assume that the angle of arrivals (AoAs) and angle of departures (AoDs) are discrete with finite resolutions although the actual angles are continuous in practice \cite{lee2016channel}-\cite{lim2020efficient}. 
The accuracy of these schemes depends on the quantization resolution as well as the number of measurements \cite{gao2015spatially}. As a result, the training overhead can be significantly high in order to achieve satisfactory accuracy. On the other hand, the gridless schemes deal directly with the continuous domain without imposing a discrete dictionary \cite{pejoski2015estimation}-\cite{sanchez2021gridless}. The gridless schemes increase the accuracy of estimation at the expense of increased computational complexity. 
None of these channel estimation schemes have been explicitly designed for channels dominated by a line-of-sight (LoS) path. The question that we would like to answer in this paper is what if the channel is dominated by a LoS path. Answering this question is important because in future mmWave and THz communications the channel is sparse and LoS dominant.

In the user localization problem, the parametric channel model is used to estimate the distance of the user from the BS and the elevation and azimuth AoD \cite{shafin2017angle}, \cite{he2020channel2}. 
Thereby, the channel estimation problem and user localization are highly correlated \cite{li2021millimeter}, \cite{wang2017unified}.  Different from the studies on localization problems \cite{wymeersch2020radio}-\cite{guidi2021radio}, in this work, we circumvent the estimation of distance in order to obtain a closed-form solution for the channel estimation problem for channels dominated by a LoS path. 
 
 



\subsection{Main Contributions}
In this paper, we focus on channels that are comprised of a LoS path, and aim at developing a channel estimation strategy that fully exploit the structure of the radiated pattern to significantly reduce the channel estimation overhead.
Different from the aforementioned works, we obtain a closed-form solution for the channel estimation problem based on a parametric channel model. 
The training overhead and the computational complexity of the proposed scheme do not scale with the the number of antennas. Specifically, our main contributions are as follows.

\begin{itemize}
  \item 
Based on the parametric channel model, we show that only three parameters are required to be estimated for channels dominated by a LoS path. These three parameters include the distance from the BS to the user and the elevation and azimuth AoD. However, the distance from the BS to the user has to be estimated very precisely, which is not possible in practice. Therefore, we simplify the channel for the far-field region of the BS to circumvent the estimation of the distance parameter. We show that if we use the simplified model, the number of parameters needed to be estimated decreases from three to two.
 
  

  \item
 For the near-field of the BS, we partition the continuous aperture into tiles such that the far-field condition holds for each tile \cite{najafi2020physics}. Then, the channel in the near-field is modeled as the superposition of the channels through the individual tiles. We show that the channel models in the far-field and near-field regions have similar structures, which can be written in terms of sinc functions for linear phase shift profiles at the antennas. 

 \item Finally, we exploit the specific structure of the radiated beams generated by the continuous aperture to propose the channel estimation scheme. 
 Since the performance of the proposed iterative algorithm depends on the initial values of the algorithm, we propose a simple method to provide initial values using three pilot signals. We numerically demonstrate that the proposed estimation scheme significantly outperforms other benchmark schemes.


\end{itemize}


\begin{figure}[t]
\centering
\includegraphics[width=0.55\linewidth]{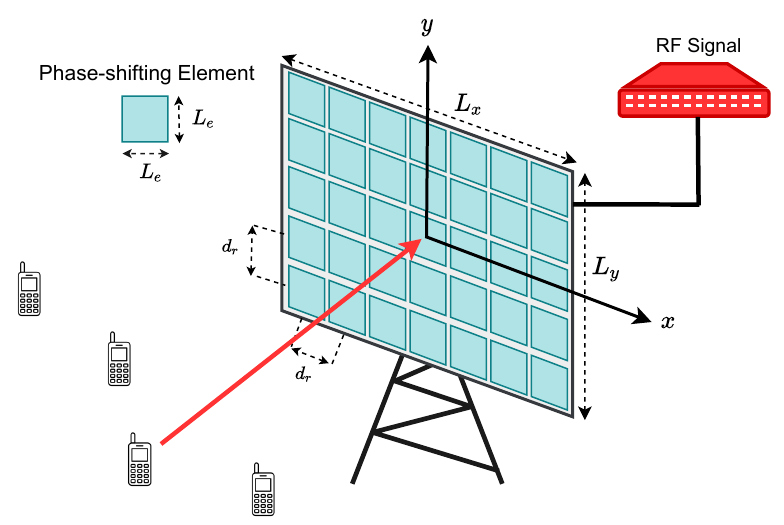}
\caption{\small{Schematic illustration of the holographic MIMO transceiver. Each phase-shifting element of the aperture changes the phase of the RF signal, and hence the BS is able to send a beamformed signal towards the users.}}
\label{Fig1}
\end{figure}

\section{System Model}
\subsection{System Model}
\label{SystemModel}
We consider a holographic massive MIMO system where a BS equipped with a rectangular aperture is serving multiple users with a single antenna, see Fig. \ref{Fig1}. For data communication, multiple RF chains can be used at the BS to transmit and receive a superposition of beams to serve multiple users using spatial multiplexing. However, we use only one of these RF chains at the BS for the channel estimation. We assume that the users transmit orthogonal pilot symbols than are then used for the channel estimation at the BS.


As shown in Fig. \ref{Fig1}, the size of the rectangular aperture is $L_x \times L_y$, where $L_x$ and $L_y$ are the width and the length of the aperture, respectively. The aperture is comprised of a large number of sub-wavelength phase-shifting elements of size $L_e \times L_e$ that can change the phase of the signal.
We assume that the aperture lies in the $x-y$ plane of a Cartesian coordinate system, and its center is placed at the origin of the coordinate system, see Fig. \ref{Fig1}. Let $d_r$ be the distance between neighboring phase-shifting elements. 
The total number of phase-shifting elements of the aperture is given by $M=M_x \times M_y$, where $M_x=L_x/d_r$ and $M_y=L_y/d_r$. Assuming $M_x$ and $M_y$ are odd numbers, the position of the $(m_x,m_y)$-th phase-shifting element is given by $(x,y)=(m_xd_r,m_yd_r)$ for $m_x=-\frac{M_x-1}{2},\cdots, \frac{M_x-1}{2}$ and $m_y=-\frac{M_y-1}{2},\cdots, \frac{M_y-1}{2}$.

In the uplink communication direction, the baseband received signal at the BS, denoted by $y$, can be expressed as follows:
 \begin{align}
    y = \text{vec}\left(\mathbf{\Gamma}\right)^\text{T}\left( \text{vec}\left(\mathbf{H}\right) x + \mathbf{n}\right),
     \label{signal_model}
 \end{align} 
where $\mathbf{\Gamma} \in \mathbb{C}^{M_x \times M_y}$ is the beamforming weight matrix at the BS, $\mathbf{H}\in \mathbb{C}^{M_x \times M_y}$ is the channel matrix from the user to the phase-shifting elements, $x$ is the transmitted signal from the user, and $\mathbf{n}$ is the Gaussian noise vector. The $(m_x+\frac{M_x+1}{2})$-th row and $(m_y+\frac{M_y+1}{2})$-th column of $\mathbf{\Gamma}$, denoted by $ [\mathbf{\Gamma}]_{\left(m_x+\frac{M_x+1}{2},m_y+\frac{M_y+1}{2}\right)} $, represents the beamforming weight at the $(m_x,m_y)$-th phase-shifting element. Similarly, the $(m_x+\frac{M_x+1}{2})$-th row and $(m_y+\frac{M_y+1}{2})$-th column of $\mathbf{H}$, denoted by $ [\mathbf{H}]_{\left(m_x+\frac{M_x+1}{2},m_y+\frac{M_y+1}{2}\right)} $, represents the channel coefficient between the $(m_x,m_y)$-th phase-shifting element and the user.


\subsection{Channel Model}
\label{ChannelModel}
In this subsection, we model the channel between the BS and the users based on the path parameters of the system. Since the users send orthogonal pilot symbols for channel estimation, we model the channel between the BS and a typical user. In addition to the LoS channel between the BS and the user, there are $Q$ scatters, where each scatter is assumed to contribute a single propagation path between the BS and the user. Therefore, the BS-user channel consists of one LoS and $Q$ non-LoS (NLoS) components.

Let $d_{m_xm_y}$ and $l_{m_xm_y,q}$ denote the distance from the user to the center of the$(m_x,m_y)$-th phase-shifting element at the BS through the LoS path and $q$-th scatter, respectively. 
 The channel coefficient between the $(m_x,m_y)$-th phase-shifting element and the user, $  [\mathbf{H}]_{\left(m_x+\frac{M_x+1}{2},m_y+\frac{M_y+1}{2}\right)}  $, is represented as 
 \begin{align}
   [\mathbf{H}]_{\left(m_x+\frac{M_x+1}{2},m_y+\frac{M_y+1}{2}\right)}  =\underbrace{ \text{PL}\left(d_{m_xm_y}\right)e^{-jk_0d_{m_xm_y}}}_\text{LoS component} + \underbrace{\sum_{q=1}^{Q} \zeta_q \text{PL}\left(l_{m_xm_y,q}\right) e^{-jk_0l_{m_xm_y,q}}}_\text{NLoS components}, 
     \label{ITSelement_user}
 \end{align} 
 where $k_0=2\pi/\lambda$ is the wave number, $\lambda$ is the wavelength of the carrier frequency, $\zeta_q$ is the small-scale fading of the BS-user channel through the $q$-th scatter, and $\text{PL}\left(.\right)$ is the channel path loss function. According to \cite{ellingson2019path}, the channel path loss for the LoS path can be expressed as 
 \begin{align}
\text{PL}\left(d_{m_xm_y}\right)=\frac{\lambda \sqrt{F_{m_xm_y}}}{4\pi d_{m_xm_y}},
     \label{LoS_component}
 \end{align} 
 where $F_{m_xm_y}$ accounts for the effect of the size and power radiation pattern of the phase-shifting elements on the path gain. 

Due to severe path-loss at the mmWave and THz frequency bands, the power of the LoS component is much higher than the power of  the NLoS component \cite{akdeniz2014millimeter}, \cite{muhi2010modelling}. In fact, it is less likely to build an effective communication via NLoS components in presence of a strong LoS link \cite{wang2021joint}. Thereby, we focus on the LoS component to obtain a closed form expression for the BS-user channel\footnote{The effect of NLoS components on the proposed scheme will be considered as an interference term and investigated in the Numerical Results section.}. As a result, from (\ref{ITSelement_user}) and (\ref{LoS_component}), the LoS component of the channel coefficient between the $(m_x,m_y)$-th phase-shifting element and the user can be written as
 \begin{align}
   [\mathbf{H}]_{\left(m_x+\frac{M_x+1}{2},m_y+\frac{M_y+1}{2}\right)} =\frac{\lambda \sqrt{F_{m_xm_y}}}{4\pi d_{m_xm_y}}\left(d_{m_xm_y}\right)e^{-jk_0d_{m_xm_y}}.
     \label{hLoS component}
 \end{align}


 
 Each phase-shifting element at the aperture can be configured to impose different levels of phase shifts on the transmitted and received signal \cite{shlezinger2021dynamic}. Let $ [\mathbf{\Gamma}]_{\left(m_x+\frac{M_x+1}{2},m_y+\frac{M_y+1}{2}\right)}=  e^{j\beta_{m_xm_y}}$ denote the beamforming weight of the $(m_x,m_y)$-th phase-shifting element at the aperture, where $\beta_{m_xm_y}$ is the phase shift at the $(m_x,m_y)$-th element. 
For ease of presentation, let us define the phase shift parameters of the aperture for all elements as 
$\bm{\beta}= \left\{\beta_{m_xm_y};\, \forall m_x, m_y \right\}$.
Finally, considering the effect of all phase-shifting elements, the effective BS-user LoS channel, denoted by $G(\bm{\beta})\triangleq \text{vec}\left(\mathbf{\Gamma}\right)^\text{T} \text{vec}\left(\mathbf{H}\right)$, can be written as
\begin{align}
   G\left(\bm{\beta}\right) & = \sum_{m_x=-\frac{M_x-1}{2}}^{\frac{M_x-1}{2}}\sum_{m_y=-\frac{M_y-1}{2}}^{\frac{M_y-1}{2}}  [\mathbf{\Gamma}]_{\left(m_x+\frac{M_x+1}{2},m_y+\frac{M_y+1}{2}\right)}  [\mathbf{H}]_{\left(m_x+\frac{M_x+1}{2},m_y+\frac{M_y+1}{2}\right)} \nonumber\\
    & = \frac{ \lambda}{4\pi} \sum_{m_x=-\frac{M_x-1}{2}}^{\frac{M_x-1}{2}}\sum_{m_y=-\frac{M_y-1}{2}}^{\frac{M_y-1}{2}} \frac{\sqrt{F_{m_xm_y}}}{d_{m_xm_y}}e^{-j\left(k_0 d_{m_xm_y}-\beta_{m_xm_y}\right)} .
    \label{endtoend_exact_ITS}
\end{align}
From (\ref{endtoend_exact_ITS}), it is clear that the effective BS-user channel depends on the imposed phase shift by each element at the aperture and the distance of each element from the user.

\subsection{Optimal Phase Shifts at the Aperture}
\label{2C}
\begin{figure}
\centering
    \includegraphics[scale=0.65]{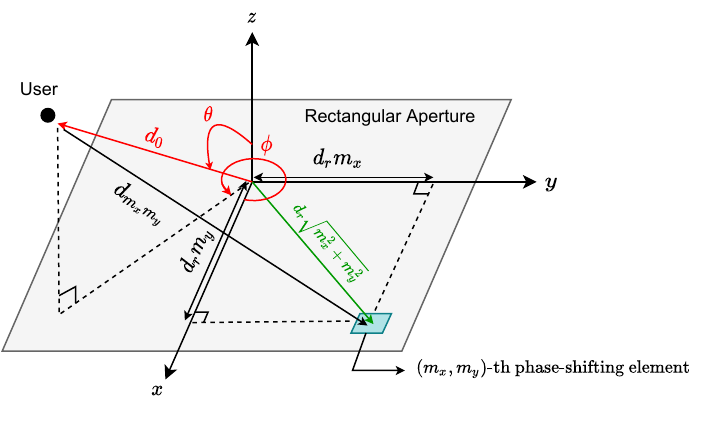}
\caption{\small{Using the cosin rule to obtain the distance between the $(m_x,m_y)$-th phase-shifting element at the aperture to the user. 
}}
\label{fig3}
\end{figure}

According to the effective BS-user channel in (\ref{endtoend_exact_ITS}), it is clear that the effective channel gain of LoS path,  $G\left(\bm{\beta}\right)$, is maximized by setting $\beta_{m_xm_y}$ to 
\begin{align}
    \beta_{m_xm_y}= \text{mod}\left(k_0d_{m_xm_y},2\pi\right), \:\forall \, m_x,m_y.
    \label{nearphase_ITS}
\end{align}
The total number of parameters to be estimated is equal to the total number of phase-shifting elements at the aperture $M_x\times M_y$. 

Let $d_0$ denote the distance between the user and the center of the aperture
.  
In addition, let $\theta \in [0,\pi/2]$ and $\phi \in [0,2\pi]$ denote the elevation and azimuth angles of the impinging wave from the user to the center of the aperture, see Fig. \ref{fig3}. 
Using the cosin rule, the distance between the user and the $(m_x,m_y)$-th phase-shifting element can be written as \cite{antennatheory}
\begin{align}
    d_{m_xm_y} = \Big(d_0^2 + d_r^2\big(m_x^2+m_y^2\big)-2d_0d_r\big(m_x\alpha_{1} 
    +m_y \alpha_{2}
    \big)\Big)^{\frac{1}{2}},
    \label{neard_mxmy1}
\end{align}
where
\begin{align}
   & \alpha_{1}=\sin{\left(\theta\right)}\cos{\left(\phi\right)},
   \label{alpha_1}\\
    &\alpha_{2}=\sin{\left(\theta\right)}\sin{\left(\phi\right)}.
    \label{alpha_2}
\end{align}
Let us emphasize that for any $m_x$ and $m_y$, all the three parameters, $d_0$, $\alpha_1$, and $\alpha_2$ are the same. Substituting (\ref{neard_mxmy1}) into (\ref{nearphase_ITS}), we obtain the final expression for the optimal phase shift at the $(m_x,m_y)$-th element of the aperture that maximizes the effective BS-user channel as
\begin{align}
    \beta_{m_xm_y}= \text{mod}\bigg(k_0\Big(d_0^2 + d_r^2\big(m_x^2+m_y^2\big)-2d_0d_r\big(m_x\alpha_{1} 
    +m_y \alpha_{2}
    \big)\Big)^\frac{1}{2}
    ,2\pi\bigg), \:\forall \, m_x,m_y.
    \label{nearphase_1}
\end{align}
From (\ref{nearphase_1}), we conclude that in order to obtain the optimal phase shifts at the aperture, we need to estimate only three parameters, $d_0, \alpha_{1},$ and $\alpha_{2}$.   Therefore, using (\ref{nearphase_1}) reduces the number of parameters from $M_x \times M_y$ to three. This is because the initial $M_x \times M_y$ parameters are mutually dependent with three independent degrees of freedom $d_0$, $\alpha_1$ and $\alpha_2$. Since these three parameters specify the user location, the channel estimation problem in LoS dominated channel is equivalent with the localization problem. However, $d_0$ has to be estimated very precisely to set the optimal phase shifts, which is not possible in practice. In the following, we circumvent the estimation of $d_0$ such that $\alpha_1$ and $\alpha_2$ are the only two parameters left to estimate.

\section{Far-Field and Near-Field Channel Models}
\label{section4}
In this section, we provide the effective BS-user channel models for the far-field and the near-field regions. Then, in Sec. V, we will utilize the specific structure of the channel model in dominated LoS link to develop our channel estimation scheme.  

\subsection{Far-Field Channel Model}
\label{far_field_approx}
For the sake of completeness, we follow the same approach as [48, Ch. 6] to simplify the BS-user channel in (\ref{endtoend_exact_ITS}) such that it holds for the far-field region of the BS. 
In the antenna array terminology, the far-field is referred to as the condition that the maximum phase error of the received signal on the antenna array does not exceed $\frac{\pi}{8}$ \cite{selvan2017fraunhofer}.

The distance between the user and the $(m_x,m_y)$-th phase-shifting element at the aperture in (\ref{neard_mxmy1}) can be rewritten as 
\begin{align}
d_{m_xm_y}=d_0\Bigg(1 + \Big( \frac{d_r^2\left(m_x^2+m_y^2\right)}{d_0^2} - \frac{2d_r\left(m_x \alpha_{1}+m_y\alpha_{2}\right)}{d_0}\Big)\Bigg)^{\frac{1}{2}}.
    \label{rewrite_d}
\end{align}
Now, we can use the first-order Taylor approximation \cite{jeffrey2007table}, i.e.,  $\sqrt{1+x}\approx1+\frac{x}{2}$, to approximate $d_{m_xm_y}$ in (\ref{rewrite_d}) as
\begin{align}
d_{m_xm_y}\approx d_0 -d_r\left(m_x \alpha_{1}+m_y\alpha_{2}\right)+\frac{d_r^2\left(m_x^2+m_y^2\right)}{2d_0}.
    \label{fard_mxmy}
\end{align}
Substituting (\ref{fard_mxmy}) into (\ref{endtoend_exact_ITS}), the effective BS-user channel in the far-field region of the BS, denoted by $G^{(ff)}(\bm{\beta})$, is then approximated as\footnote{Based on the Lagrange error bound of the Taylor approximation as the worse case \cite{jeffrey2007table}, the maximum error is a function of $\left(\frac{d_r}{d_0}\right)^2$, where $d_r$ is much less than $d_0$ in practical scenarios.}
\begin{align}
   G^{(ff)}(\bm{\beta}) \!\approx\! \frac{ \lambda}{4\pi}\!\! \sum_{m_x=-\frac{M_x-1}{2}}^{\frac{M_x-1}{2}}\sum_{m_y=-\frac{M_y-1}{2}}^{\frac{M_y-1}{2}}\!&\frac{\sqrt{F_{m_xm_y}}}
    {\left(d_0-d_r\left(m_x \alpha_{1}+m_y\alpha_{2}\right)+\frac{d_r^2\left(m_x^2+m_y^2\right)}{2d_0}\right)} \nonumber\\
    \times  & e^{-j\left(\!k_0\left(d_0-d_r\left(m_x \alpha_{1}+m_y\alpha_{2}\right)+\frac{d_r^2\left(m_x^2+m_y^2\right)}{2d_0}
    \right)\!-\beta_{m_xm_y}\!\right)}.
    \label{endtoend_approx1}
\end{align}
When $\frac{d_r^2\left(m_x^2+m_y^2\right)}{2d_0}$  in the argument of the exponential term in (\ref{endtoend_approx1}) is much smaller than $2\pi$, we can neglect its impact.
Neglecting this term in (\ref{endtoend_approx1}) leads to the following maximum phase error 
\begin{align}
    \underset{m_x,m_y}{\text{Maximize}} \,k_0 \left( \frac{d_r^2\left(m_x^2+m_y^2\right)}{2d_0} \right)= \frac{k_0 (\frac{L_x^2}{4}+\frac{L_y^2}{4})}{2d_0}= \frac{k_0 D^2}{8d_0},
\end{align}
where $D_\text{}$ is the diagonal of the aperture.  Assuming a maximum phase error of $\frac{\pi}{8}$, the far-field region of the BS is obtained as
\begin{align}
    \frac{2\pi}{\lambda} \times \frac{D_{\text{}}^2}{8d_0} \leq \frac{\pi}{8},
\end{align}
which leads to
\begin{align}
    d_0 \geq \frac{2D_{\text{}}^2}{\lambda},
    \label{farcondition}
\end{align}
where $d_F = \frac{2D_{\text{}}^2}{\lambda}$ is referred to as the Fraunhofer distance \cite{antennatheory}. 
Neglecting $\frac{d_r^2\left(m_x^2+m_y^2\right)}{2d_0}$ in the argument of the exponential term in (\ref{endtoend_approx1}), we have
\begin{align}
    &G^{(ff)}(\bm{\beta}) \approx \frac{ \lambda e^{-jk_0d_0}}{4\pi}  \sum_{m_x=-\frac{M_x-1}{2}}^{\frac{M_x-1}{2}}\sum_{m_y=-\frac{M_y-1}{2}}^{\frac{M_y-1}{2}}\frac{\sqrt{F_{m_xm_y}} e^{j\left(k_0d_r\left(m_x \alpha_{1}+m_y\alpha_{2}
    \right)+\beta_{m_xm_y}\right)}}
    {\left(d_0-d_r\left(m_x \alpha_{1}+m_y\alpha_{2}\right)\right)}.
    \label{endtoend_approx2}
\end{align}
According to (\ref{farcondition}), since the size of the aperture is much smaller than $d_0$, we can approximate $\frac{1}{\left(d_0-d_r\left(m_x \alpha_{1}+m_y\alpha_{2}\right)\right)}$ with $\frac{1}{d_0}$. In addition, since the radiation power pattern of all elements in the far-field region of the BS is the same, we can write $F_{m_xm_y}=F$. Therefore, the BS-user channel in the far-field of the BS can be written as
\begin{align}
    &G^{(ff)}(\bm{\beta}) \approx \frac{ \sqrt{F}\lambda e^{-jk_0d_0}}{4\pi d_0} \sum_{m_x=-\frac{M_x-1}{2}}^{\frac{M_x-1}{2}}\sum_{m_y=-\frac{M_y-1}{2}}^{\frac{M_y-1}{2}}e^{j\left(k_0d_r\left(m_x \alpha_{1}+m_y\alpha_{2}
    \right)+\beta_{m_xm_y}\right)}.
    \label{endtoend_approx3}
\end{align}
It can be observed from (\ref{endtoend_approx3}) that $G^{(ff)}(\bm{\beta})$ attains its maximum value when we set
\begin{align}
    \beta_{m_xm_y}=-\text{mod}\left( k_0 d_r \left(m_x \alpha_1 + m_y \alpha_2 \right),2\pi\right), \:\forall \, m_x,m_y.
    \label{optimalphase_far}
\end{align}
From (\ref{optimalphase_far}), we can conclude that in order to obtain the optimal phase shift for maximizing the effective BS-user channel in the far-field region of the BS, we need to estimate only two parameters, $\alpha_1$ and $\alpha_2$.

Since we do not have the values of $\alpha_1$ and $\alpha_2$, we consider a general case to see what would be the BS-user channel in the far-field of the BS if we replace $\alpha_1$ and $\alpha_2$ with any other values. In the following lemma, we apply a linear phase shift to each phase-shifting element to obtain a closed-form expression for the effective BS-user channel in the far-field of the BS.

\begin{lemma}
\label{lemma1}
 \normalfont
If we apply the following linear phase shift across the aperture elements, i.e., to the $(m_x,m_y)$-th element
\begin{align}
    \beta_{m_xm_y} = -\text{mod}\Big( k_0 d_r \big(m_x \beta_1 + m_y \beta_2 \big),2\pi\Big), \:\forall \, m_x,m_y,
    \label{phase_shift_far}
\end{align}
then the BS-user channel in the far-field of the BS is a function of $\beta_1$ and $\beta_2$, and can be obtained as 
\begin{align}
    &G^{(ff)}(\beta_1,\beta_2) \approx \left(\frac{\sqrt{F}\lambda e^{-jk_0d_0}}{4\pi d_0}\right)  \frac{\sin{\Big(\frac{k_0L_x}{2}\left(\alpha_1-\beta_1\right)\!\Big)}}{\sin{\left(\frac{k_0d_r}{2}\left(\alpha_1-\beta_1\right)\right)}} \times \frac{\sin{\left(\frac{k_0L_y}{2}\left(\alpha_2-\beta_2\right)\right)}}{\sin{\left(\frac{k_0d_r}{2}\left(\alpha_2-\beta_2\right)\right)}}.
    \label{ITS-user_1}
\end{align}
\end{lemma}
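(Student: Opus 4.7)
The plan is to start from the far-field approximation of the LISBT-user channel given in (\ref{endtoend_approx3}), substitute the proposed linear phase shift (\ref{phase_shift_far}), and then recognize the resulting double sum as a product of two Dirichlet-type geometric sums that admit a closed-form expression in terms of sines.

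First I would observe that since the modulo operation leaves the complex exponential invariant, i.e., $e^{j\,\text{mod}(x,2\pi)}=e^{jx}$, plugging (\ref{phase_shift_far}) into (\ref{endtoend_approx3}) yields
\begin{align}
H^{(ff)}(\beta_0,\beta_1,\beta_2) \approx \frac{\sqrt{F}\lambda e^{-jk_0 d_0}}{4\pi d_0}\sum_{m_x=-\frac{M_x-1}{2}}^{\frac{M_x-1}{2}}\sum_{m_y=-\frac{M_y-1}{2}}^{\frac{M_y-1}{2}} e^{j k_0 d_r \left(m_x(\alpha_1-\beta_1)+m_y(\alpha_2-\beta_2)\right)}.
\end{align}
Because the exponent is additively separable in $m_x$ and $m_y$, the double sum factorizes as a product of two one-dimensional sums, one indexed by $m_x$ and the other by $m_y$.

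Next I would apply the standard geometric/Dirichlet identity
\begin{align}
\sum_{m=-\frac{M-1}{2}}^{\frac{M-1}{2}} e^{j m \gamma} = \frac{\sin(M\gamma/2)}{\sin(\gamma/2)},
\end{align}
with $\gamma = k_0 d_r(\alpha_i-\beta_i)$ for $i=1,2$, to each factor. This is a routine symmetric geometric series calculation: pull out the factor $e^{-j\frac{M-1}{2}\gamma}$ to reindex the sum as $\sum_{n=0}^{M-1} e^{jn\gamma}=(1-e^{jM\gamma})/(1-e^{j\gamma})$, and then factor $e^{jM\gamma/2}$ and $e^{j\gamma/2}$ from numerator and denominator respectively to get the sine ratio.

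Finally I would use the identities $M_x d_r = L_x$ and $M_y d_r = L_y$ (which follow directly from the definitions $M_x=L_x/d_r$, $M_y=L_y/d_r$) to replace $M_x k_0 d_r/2$ by $k_0 L_x/2$ and $M_y k_0 d_r/2$ by $k_0 L_y/2$ inside the sine arguments. Multiplying the two resulting Dirichlet kernels together with the prefactor $\sqrt{F}\lambda e^{-jk_0 d_0}/(4\pi d_0)$ yields exactly the expression in (\ref{ITS-user_1}). The only mildly delicate point is the limiting behavior when $\alpha_i = \beta_i$ (so that the denominator vanishes), but both numerator and denominator go to zero at the same rate so the ratio extends continuously to $M_x$ (respectively $M_y$), matching the direct evaluation of the sum at that point; I would simply interpret the ratio in this continuous-extension sense. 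Overall, there is no genuine obstacle in this proof beyond recognizing the Dirichlet kernel structure hidden inside the factorized double sum.
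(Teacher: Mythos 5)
Your proposal is correct and follows essentially the same route as the paper's proof in Appendix A: substitute the linear phase shift into (\ref{endtoend_approx3}), factor the double sum, evaluate each factor with the symmetric geometric-series (Dirichlet kernel) identity, and substitute $M_x d_r = L_x$, $M_y d_r = L_y$. Your additional remarks on the modulo invariance of the exponential and the removable singularity at $\alpha_i=\beta_i$ are sound points the paper leaves implicit.
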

\begin{proof}
 The proof follows similar steps as [42, and 47, Ch. 6] and is provided in  Appendix A for completeness.
\end{proof}
For the extremely sub-wavelength elements ($d_r \rightarrow 0$), the aperture acts as a continuous surface  \cite{huang2020holographic}. Assuming $d_r \rightarrow 0$, we can approximate $\sin{\left(\frac{k_0d_r}{2}\left(\alpha_1-\beta_1\right)\right)}$ and $\sin{\left(\frac{k_0d_r}{2}\left(\alpha_2-\beta_2\right)\right)}$ in the denominator of (\ref{ITS-user_1}) with $\frac{k_0d_r}{2}\left(\alpha_1-\beta_1\right)$ and $\frac{k_0d_r}{2}\left(\alpha_2-\beta_2\right)$, respectively. Then, the BS-user channel in the far-field of the BS in (\ref{ITS-user_1}) can be written as
\begin{align}
    G^{(ff)}(\beta_1,\beta_2) \approx \left(\frac{ \sqrt{F}\lambda e^{-jk_0d_0}}{4\pi d_0}\right)\! M_x M_y \sinc{\left(\frac{k_0 L_x}{2}\left(\alpha_1\!-\!\beta_1\right)\right)} \sinc{\left(\frac{k_0 L_y}{2}\left(\alpha_2\!-\!\beta_2\right)\right)},
    \label{ITS-user_2}
\end{align}
where $\sinc(x)=\frac{\sin(x)}{x}$.
According to (\ref{ITS-user_2}), the absolute value of the BS-user channel would be maximized when the sinc functions attain their maximum value, which occurs for the first and second sinc functions  when $\beta_1$ and $\beta_2$ are set as $\beta_1 = \alpha_1$ and $\beta_2 = \alpha_2$, respectively. 
Therefore, when we use the far-field channel model, we only need to estimate two parameters, $\alpha_1$ and $\alpha_2$.
 Compared with the optimal phase shift in (\ref{nearphase_1}), we observe that the number of parameters needed to be estimated decreases from three to two. 
 Most importantly, we have avoided the estimation of $d_0$, which was the problematic parameter for estimation. In Sec. \ref{absence}, we exploit the properties of these two sinc functions to propose the channel estimation scheme.

\subsection{Near-Field Channel Model}
\label{nearfield}
\begin{figure}
\centering
    \includegraphics[scale=0.6]{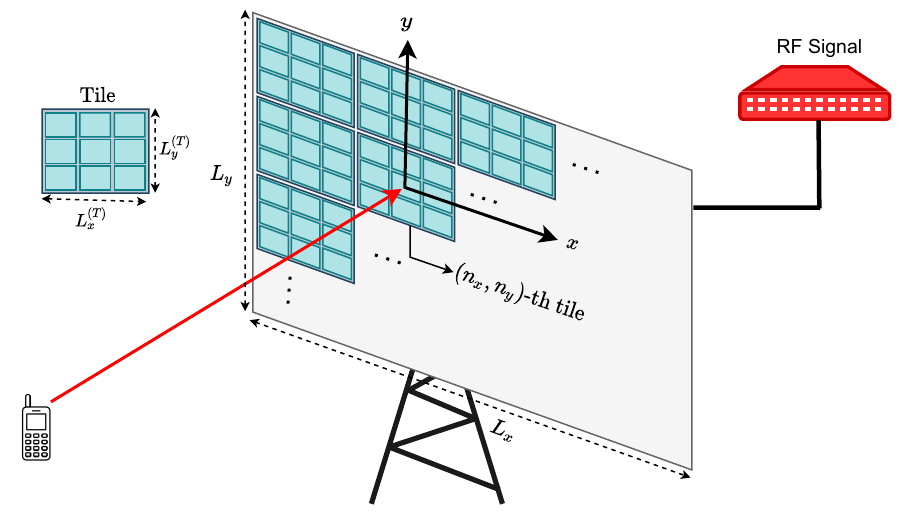}
\caption{\small{The aperture is partitioned into tiles such that the far-field condition holds for each tile.
}}
\label{fig4}
\end{figure}
In this subsection, we will present a model for the effective BS-user channel in (\ref{endtoend_exact_ITS}) that holds in the near-field region of the aperture. According to (\ref{farcondition}), the near-field region of the BS is given by\footnote{We assume that $d_0$ is larger than the Fresnel distance, denoted by $d_n = \sqrt[3]{\frac{D^4}{8\lambda}}$, so the system
does not operate in the reactive near-field of the aperture \cite{antennatheory}.}
\begin{align}
 d_0\leq \frac{2D^2}{\lambda}.
\end{align}

Similar to \cite{najafi2020physics}, we partition the aperture into tiles which are small enough to satisfy the far-field condition in (\ref{farcondition}). Different tiles can jointly configure their phase-shifting elements to maximize the channel through each tile to the user. 
As shown in Fig. \ref{fig4}, we assume that the aperture is partitioned into $N_x \times N_y$ tiles of size $L_x^{(\text{T})} \times L_y^{(\text{T})}$, where $L_x^{(\text{T})}=L_x/N_x$ and $L_y^{(\text{T})}=L_y/N_y$ are the width and the length of the tile, respectively.
The total number of phase-shifting elements of each tile is given by $M^{(\text{T})}=M^{(\text{T})}_x \times M^{(\text{T})}_y$, where $M^{(\text{T})}_x=L^{(\text{T})}_x/d_r$ and $M^{(\text{T})}_y=L^{(\text{T})}_y/d_r$.
Assuming $N_x$ and $N_y$ are odd numbers, the position of the center of the $(n_x,n_y)$-th tile is given by $(x,y)=(n_xL_x^{(\text{T})},n_yL_y^{(\text{T})})$ for $n_x=-\frac{N_x-1}{2}, \cdots, \frac{N_x-1}{2}$ and $n_y=-\frac{N_y-1}{2}, \cdots, \frac{N_y-1}{2}$. 


According to the far-field condition in (\ref{farcondition}), we can use the far-field channel model between a given tile of the aperture and the user if 
the following condition holds
\begin{align}
 d_0\geq \frac{2D_{\text{T}}^2}{\lambda},
 \label{near_field_tile}
\end{align}
 where $D_{\text{T}}$ is the diagonal of the tile. Substituting $D_{\text{T}}=\sqrt{\left(\frac{L_x}{N_x}\right)^2+\left(\frac{L_y}{N_y}\right)^2}$ into (\ref{near_field_tile}), we can obtain $N_x$ and $N_y$ from the following inequality
\begin{align} 
 d_0 \geq \frac{2\left(\left(\frac{L_x}{N_x}\right)^2+\left(\frac{L_y}{N_y}\right)^2\right)}{\lambda}.   
\end{align}

We first focus on the channel between the $(n_x,n_y)$-th tile of the aperture and the user. 
Without loss of generality, we assume that the origin of the coordinate system is placed at the center of the $(n_x,n_y)$-th tile.
Let $\theta_{n_xn_y}$ and $\phi_{n_xn_y}$ denote the elevation and azimuth angles of the impinging wave from the user to the center of the $(n_x,n_y)$-th tile of the aperture.
Similar to (\ref{alpha_1}) and (\ref{alpha_2}), let us define $\alpha_{1_{n_xn_y}}= \sin{\left(\theta_{n_xn_y}\right)} \cos{\left(\phi_{n_xn_y}\right)}$, $\alpha_{2_{n_xn_y}}= \sin{\left(\theta_{n_xn_y}\right)} \sin{\left(\phi_{n_xn_y}\right)}$,  for the $(n_x,n_y)$-th tile at the aperture. 
Moreover, let $d_{n_xn_y}^{(\text{T})}$ denote the distance between the center of the $(n_x,n_y)$-th tile and the user, see Fig. \ref{fig4}.
 Assuming $M^{(\text{T})}_x$ and $M^{(\text{T})}_y$ are odd numbers, the position of the $(m_x,m_y)$-th phase-shifting element of the $(n_x,n_y)$-th tile is given by $(x,y)=(m_xd_r,m_yd_r)$ for $m_x=-\frac{M^{(\text{T})}_x-1}{2},\cdots, \frac{M^{(\text{T})}_x-1}{2}$ and $m_y=-\frac{M^{(\text{T})}_y-1}{2},\cdots, \frac{M^{(\text{T})}_y-1}{2}$.
Similar to Lemma \ref{lemma1}, if we apply the following linear phase shift to the $(m_x,m_y)$-th phase-shifting element of the $(n_x,n_y)$-th tile
\begin{align}
    \beta^{n_xn_y}_{m_xm_y} = -\text{mod}\Big( k_0 \big(m_x\beta_{1_{n_xn_y}} +m_y  \beta_{2_{n_xn_y}}  \big) +\beta_{0_{n_xn_y}},2\pi\Big)\:\forall \, m_x,m_y ,
    \label{phase_shift-cont}
\end{align}
then, the channel between the $(n_x,n_y)$-th tile and the user, denoted by $G_{n_xn_y}(\beta_{0_{n_xn_y}}, \beta_{1_{n_xn_y}},$ $\beta_{2_{n_xn_y}})$, can be obtained as
\begin{align}
  &G_{n_xn_y}\left(\beta_{0_{n_xn_y}},\beta_{1_{n_xn_y}},\beta_{2_{n_xn_y}}\right) \approx \left(\frac{ \sqrt{F_{n_xn_y}}\lambda e^{-jk_0 d_{n_xn_y}^{(\text{T})}}}{4\pi d_{n_xn_y}^{(\text{T})}}\right)
    M_x^{(\text{T})} M_y^{(\text{T})} \times\nonumber \\ 
    & \hspace{3cm}  \sinc\!{\left(\!\frac{k_0 L_x^{(\text{T})}}{2}\!\left(\alpha_{1_{n_xn_y}}\!\!-\!\beta_{1_{n_xn_y}}\right)\!\!\right)}\! \sinc\!{\left(\!\frac{k_0 L_y^{(\text{T})}}{2}\!\left(\alpha_{2_{n_xn_y}}\!\!-\!\beta_{2_{n_xn_y}}\right)\!\!\right)} e^{-j\beta_{0_{n_xn_y}}},
    \label{channel-tile}
\end{align}
where $\beta_{0_{n_xn_y}}$ determines the phase of the channel \cite{najafi2020physics} and $F_{n_xn_y}$ is the radiation power pattern of the phase-shifting elements of $(n_x,n_y)$-th tile. 

For ease of presentation, let us define the phase shift parameters of the aperture for all tiles as $\left(\bm{\beta_0},\bm{\beta_1},\bm{\beta_2}\right)= \left\{\left(\beta_{0_{n_xn_y}},\beta_{1_{n_xn_y}},\beta_{2_{n_xn_y}}\right);\, \forall n_x, n_y \right\}$. The BS-user channel in the near-field region of the BS, denoted by $G^{(nf)}\left(\bm{\beta_0},\bm{\beta_1},\bm{\beta_2}\right)$, is the superposition of the channels through individual tiles and is obtained as
\begin{align} 
   \hspace{-0.2cm} G^{(nf)}\!\left(\bm{\beta_0},\bm{\beta_1},\!\bm{\beta_2}\right) \approx&\! \sum_{n_x=-\frac{N_x-1}{2}}^{\frac{N_x-1}{2}}\!\sum_{n_y=-\frac{N_y-1}{2}}^{\frac{N_y-1}{2}}\!H_{n_xn_y}\!\left(\beta_{0_{n_xn_y}},\beta_{1_{n_xn_y}},\beta_{2_{n_xn_y}}\right)\nonumber\\
    \approx& \sum_{n_x=-\frac{N_x-1}{2}}^{\frac{N_x-1}{2}}\!\sum_{n_y=-\frac{N_y-1}{2}}^{\frac{N_y-1}{2}}  \left(\frac{ \sqrt{F}\lambda e^{-jk_0 d_{n_xn_y}^{(\text{T})}}}{4\pi d_{n_xn_y}^{(\text{T})}}\right)
    M_x^{(\text{T})} M_y^{(\text{T})} \nonumber \\ 
     \times & \sinc\!{\left(\!\frac{k_0 L_x^{(\text{T})}}{2}\!\left(\alpha_{1_{n_xn_y}}\!\!-\!\beta_{1_{n_xn_y}}\right)\!\!\right)}\! \sinc\!{\left(\!\frac{k_0 L_y^{(\text{T})}}{2}\!\left(\alpha_{2_{n_xn_y}}\!\!-\!\beta_{2_{n_xn_y}}\right)\!\!\right)} e^{-j\beta_{0_{n_xn_y}}}.
    \label{channel-near-field}
\end{align}
According to (\ref{channel-near-field}), in order to maximize the BS-user channel in the near-field region, first we need to maximize the channel between the $(n_x,n_y)$-th tile of the BS and the user in (\ref{channel-tile}), i.e.,  $G_{n_xn_y}\left(\beta_{0_{n_xn_y}},\beta_{1_{n_xn_y}},\beta_{2_{n_xn_y}}\right)$, $\forall n_x, n_y$. In order to maximize $G_{n_xn_y}\left(\beta_{0_{n_xn_y}},\beta_{1_{n_xn_y}},\beta_{2_{n_xn_y}}\right)$ for the $(n_x,n_y)$-th tile, we need to estimate two parameters, $\alpha_{1_{n_xn_y}}$ and $\alpha_{2_{n_xn_y}}$. Finally, $\beta_{0_{n_xn_y}}$ $\forall n_x, n_y$ in (\ref{channel-near-field}) can be obtained such that the channels through the individual tiles have the same phase. 
Since the number of tiles is $N_x \times N_y$, we need to estimate $2N_x \times N_y$ parameters for the near-filed region of the BS and we have avoided the problematic estimation of $d_0$. Note that these parameters are dependent to each other and we use this dependency in the proposed channel estimation scheme.

\emph{Remark 1:} In the channel estimation scheme based on the location of the user, three parameters are needed to be estimated, i.e., $d_0$, $\alpha_1$, and $\alpha_2$. Substituting these parameters in (\ref{nearphase_1}), the optimal phase shifts at the aperture can be obtained. Compered to these schemes, the idea of partitioning the aperture to the tiles increases the number of unknown parameters from three to $2 N_x \times N_y$. However, the beam created by the entire aperture has a narrower beamwidth compared to each tile. Therefore, an error in the estimation of the location has potentially a larger negative impact on the performance of the location-based schemes especially for large apertures. On the other hand, it is clear from (\ref{channel-near-field}) that the idea of partitioning ensures constructive superposition by $\beta_{0_{n_xn_y}}$. In addition, since the beamwidth of each tile is larger than the entire aperture, an error in the estimation of $\alpha_{1_{n_xn_y}}$ and $\alpha_{2_{n_xn_y}}$ has a lower negative impact on the performance.

\section{Proposed Channel Estimation}
\label{absence}
Now that we have identified the parameters that need to be estimated in the far-field and near-field regions, we move towards proposing our scheme for their estimation by exploiting the specific structure of the radiated beam in the dominated LoS channel model. 
Specifically, to obtain some intuition, we first propose the channel estimation scheme under the assumption that there is no noise in the system. Once we obtain the intuitions, then we propose an iterative algorithm for the channel estimation in the presence of the noise. Since we have decomposed the near-field as a superposition of the far-field, we only present the channel estimation scheme for the far-field scenario.

\subsection{Channel Estimation in the Absence of Noise}
\label{passive estimation}
In this subsection, we propose the channel estimation scheme in the absence of noise to provide some intuitions, which is a widely adopted approach in literature \cite{sanchez2021gridless}, \cite{wang2021joint}, \cite{wang2020channel}. The study of the noise-less case (albeit not practical) reveals the key features of the radiated pattern exploited here for channel estimation and sets the basis for the proposed estimator in the noisy case. 
In the channel estimation procedure, the user sends a pilot signal $x_p = \sqrt{P_p}$ to the BS, where $P_p$ is the pilot transmit power. In the absence of noise, the received signal at the BS is given by
\begin{align}
    y(\beta_1,\beta_2) = \sqrt{P_p}\times  G^{(ff)}(\beta_1,\beta_2).
    \label{received_absence}
\end{align}
 Substituting (\ref{ITS-user_2}) into (\ref{received_absence}) and assuming $L_x = K_x \lambda$ and $L_y = K_y \lambda$, where $K_x,K_y \in \mathbb{N}$ are integer numbers, the absolute value of the received signal at the BS is given by
\begin{align}
  \Big|  y(\beta_1,\beta_2)\Big| \approx \sqrt{P_p} \left(\frac{ \sqrt{F}\lambda }{4\pi d_0}\right) \bigg| M_x M_y \sinc{\left(K_x \pi \left(\alpha_1-\beta_1\right)\right)} \sinc{\left(K_y\pi \left(\alpha_2-\beta_2\right)\right)}\bigg|.
    \label{received_pilot}
\end{align}

In the following, we show that $\alpha_1$ and $\alpha_2$ can be estimated using five pilots sent by the user. Before sending each pilot, the BS applies a new phase shift to the elements by changing $\beta_1$ and $\beta_2$ in (\ref{phase_shift_far}). For the first pilot, the BS sets $\beta_1=\hat{\beta_1}$ and $\beta_2=\hat{\beta_2}$, where $\hat{\beta_1}$ and $\hat{\beta_2}$ are two random numbers in the range of -1 to 1. Note that, according to (\ref{alpha_1}) and (\ref{alpha_2}), $\alpha_1$ and $\alpha_2$ are in the range of -1 to 1. Then, the absolute value of the received signal at the BS due to the first pilot signal is given by 
\begin{align}
   \left| y(\hat{\beta_1},\hat{\beta_2})\right| \approx \sqrt{P_p} \left(\frac{ \sqrt{F}\lambda }{4\pi d_0}\right) \bigg| M_x M_y \sinc{\left(K_x\pi \left(\alpha_1-\hat{\beta_1}\right)\right)} \sinc{\left(K_y\pi \left(\alpha_2-\hat{\beta_2}\right)\right)}\bigg|.
    \label{received_pilot_1}
\end{align}
For the second pilot, the BS sets $\beta_1=\hat{\beta_1}+v$ and $\beta_2=\hat{\beta_2}$, where the value of $v$ will be discussed later. Therefore, the absolute value of the received signal at the BS is
\begin{align}
   \left| y(\hat{\beta_1}\!+\!v,\hat{\beta_2})\right| \!\approx \! \sqrt{P_p} \left(\frac{ \sqrt{F}\lambda }{4\pi d_0}\right)\! \bigg| M_x M_y \sinc{\!\left(K_x\pi \!\left(\alpha_1\!-\!\hat{\beta_1}\!-\!v\right)\!\right)} \sinc{\!\left(K_y\pi \!\left(\alpha_2\!-\!\hat{\beta_2}\right)\!\right)}\bigg|.
    \label{received_pilot_2}
\end{align}
If the BS divides (\ref{received_pilot_1}) by (\ref{received_pilot_2}), it will obtain
\begin{align}
    \frac{ \left|   y\left(\hat{\beta_1},\hat{\beta_2}\right)\right|}{ \left|   y\left(\hat{\beta_1}+v,\hat{\beta_2}\right)\right|} \approx \frac{ \left|\sinc \left( K_x\pi\left(\alpha_1-\hat{\beta_1}\right)\right)\right|}{\left| \sinc \left(K_x\pi\left(\alpha_1-\hat{\beta_1}-v\right)\right)\right|} = \frac{\left|\frac{\sin{\left( K_x\pi\left(\alpha_1-\hat{\beta_1}\right)\right)}}{ K_x\pi\left(\alpha_1-\hat{\beta_1}\right)}\right|}{\left|\frac{\sin{\left( K_x\pi\left(\alpha_1-\hat{\beta_1}-v\right)\right)}}{\ K_x\pi\left(\alpha_1-\hat{\beta_1}-v\right)}\right|}.
    \label{devision}
\end{align}
If $v$ is selected such that $K_x v\in \mathbb{N}$, we have $\sin{( K_x\pi(\alpha_1-\hat{\beta_1}-v))}=\sin{( K_x\pi(\alpha_1-\hat{\beta_1}))}$. Then, (\ref{devision}) can be simplified to 
\begin{align}
    \frac{ \left|   y\left(\hat{\beta_1},\hat{\beta_2}\right)\right|}{ \left|   y\left(\hat{\beta_1}+v,\hat{\beta_2}\right)\right|} \approx \left|\frac{\alpha_1-\hat{\beta_1}-v}{\alpha_1-\hat{\beta_1}}\right|.
    \label{alpha_11_absence noise}
\end{align}
From equation (\ref{alpha_11_absence noise}), two solutions for $\alpha_1$, denoted by $\alpha_1^{(1)}$ and $\alpha_1^{(2)}$, can be obtained as
\begin{align}
   & \alpha_1^{(1)/(2)} = \hat{\beta_1} + \frac{\left|   y\left(\hat{\beta_1}+v,\hat{\beta_2}\right)\right|}{\left|   y\left(\hat{\beta_1}+v,\hat{\beta_2}\right)\right| \pm \left|   y\left(\hat{\beta_1},\hat{\beta_2}\right)\right|}v. 
    \label{alpha1_1_12_absence noise}
\end{align}
In order to identify the correct solution for $\alpha_1$, the third pilot signal should be sent from the user. For the third pilot signal, the BS sets $\beta_1=\hat{\beta_1}-v$ and $\beta_2=\hat{\beta_2}$. Using the received signal of the first and third pilot signals, two other solutions for $\alpha_1$, denoted by $\alpha_1^{(3)}$ and $\alpha_1^{(4)}$, can be similarly obtained as
\begin{align}
   & \alpha_1^{(3)/(4)} = \hat{\beta_1} + \frac{\left|   y\left(\hat{\beta_1}-v,\hat{\beta_2}\right)\right|}{-\left|   y\left(\hat{\beta_1}-v,\hat{\beta_2}\right)\right| \pm \left|   y\left(\hat{\beta_1},\hat{\beta_2}\right)\right|}v.
    \label{alpha1_1_34_absence noise}
\end{align}
One of the solutions in (\ref{alpha1_1_12_absence noise}) is approximately the same as one of the solutions in (\ref{alpha1_1_34_absence noise}). Therefore, using (\ref{alpha1_1_12_absence noise}) and (\ref{alpha1_1_34_absence noise}), the correct solution for $\alpha_1$ can be obtained as 
\begin{align}
\alpha_1\approx \left\{\frac{\alpha_1^{(i)}+\alpha_1^{(j)}}{2}\mathrel{\Big|} \underset{i,j}{\text{min}}|\alpha_1^{(i)}-\alpha_1^{(j)}|; i\in\{1,2\}, j \in \{3,4\}\right\}.
     \label{alpha_hat1}
\end{align}

In order to obtain $\alpha_2$, two more pilot signals are needed to be sent by the user. For these two pilot signals, the BS sets two different phase shifts as  $(\beta_1,\beta_2)=\{ (\hat{\beta_1},\hat{\beta_2}+w), (\hat{\beta_1},\hat{\beta_2}-w)\}$, where $w$ is selected such that $K_y w\in \mathbb{N}$. Then, similar to $\alpha_1$, we have the following solutions for $\alpha_2$ 
 \begin{align}
   & \alpha_2^{(1)/(2)} = \hat{\beta_2} + \frac{\left|   y\left(\hat{\beta_1},\hat{\beta_2}+w\right)\right|}{\left|   y\left(\hat{\beta_1},\hat{\beta_2}+w\right)\right| \pm \left|   y\left(\hat{\beta_1},\hat{\beta_2}\right)\right|}w, \label{alpha1_2_12_absense noise}\\
   & \alpha_2^{(3)/(4)} =\hat{\beta_2} +  \frac{\left|   y\left(\hat{\beta_1},\hat{\beta_2}-w\right)\right|}{-\left|   y\left(\hat{\beta_1},\hat{\beta_2}+w\right)\right| \pm \left|   y\left(\hat{\beta_1},\hat{\beta_2}\right)\right|}w.
    \label{alpha1_2_34_absence noise}
\end{align}
Then, using (\ref{alpha1_2_12_absense noise}) and (\ref{alpha1_2_34_absence noise}), the correct solution for $\alpha_2$ can be obtained as 
\begin{align}
\alpha_2 \approx \left\{\frac{\alpha_2^{(i)}+\alpha_2^{(j)}}{2}\mathrel{\Big|} \underset{i,j}{\text{min}}|\alpha_2^{(i)}-\alpha_2^{(j)}|; i\in\{1,2\}, j \in \{3,4\}\right\}.
     \label{alpha_hat2}
\end{align}
From (\ref{alpha_hat1}) and (\ref{alpha_hat2}), we can conclude that when the channel is dominated by LoS path, we can find the unknown parameters without any ambiguity using five pilot signals in the absence of noise\footnote{In Sec. \ref{num_result}, we show that at a high signal-to-noise ratio (SNR) regime, the proposed scheme requires only five pilots to estimate the unknown channel parameters.}.

\emph{Remark 2:} The random choices for $\hat{\beta_1}$ and $\hat{\beta_2}$ may lead to the null points of the sinc functions in (\ref{received_pilot_1}), which occurs at $\hat{\beta_1}=\alpha_1 \pm \frac{q}{K_x} $ and $\hat{\beta_2}=\alpha_2 \pm \frac{q}{K_y} $, where $q \in \mathbb{N}$. For these unfortunate initial values,  the received signals at the BS are zero, and hence we cannot estimate $\alpha_1$ and $\alpha_2$. Since the initial values are chosen randomly, the probability of occurrence at the null points of the sinc functions is zero. However, if this happens, we change the initial values from $\hat{\beta_1}$ and $\hat{\beta_2}$ to $\hat{\beta_1}+\frac{1}{2K_x}$ and $\hat{\beta_2}+\frac{1}{2K_y}$, respectively, to move from the nulls to their closest peaks. 

\subsection{Channel Estimation in the Presence of Noise}
\label{presence}
The proposed scheme in the previous subsection perfectly estimates the channel parameters for the ideal case when there is no noise at the receiver. However, that case is only for intuition purposes since noise is always unavoidable in practice. In the presence of noise at the receiver, the estimated values in (\ref{alpha_hat1}) and (\ref{alpha_hat2}) have errors.  
In this section, we utilize the estimation scheme introduced in Sec. \ref{passive estimation} to propose an iterative algorithm to decrease the channel estimation error due to the noise. 

In the presence of noise, if the user sends the pilot $x_p=\sqrt{P_p}$ to the BS, the received signal at the BS, denoted by $\hat{y}$, is given by
\begin{align}
    \hat{y}(\beta_1,\beta_2) = \sqrt{P_p}\times G^{(ff)}(\beta_1,\beta_2)+n,
    \label{received_presence}
\end{align}
where $n$ denotes the additive white Gaussian noise (AWGN) at the BS. 
\begin{algorithm}[t]
\scriptsize
\caption{\small{Finding the estimated values for $\alpha_1$ and $\alpha_2$}.
}
\label{algorithm}
\begin{algorithmic}[1]
\State Initialize $k=0$, set    $\left(\hat{\beta}_1^{(0)},\hat{\beta}_2^{(0)}\right)$ to a random pair between $[-1,1]$.
\While {convergence==False}
\State $k=k+1$ and obtain $\hat{\alpha}_1$ and $\hat{\alpha}_2$ by (\ref{final_alpha_1_presence}) and (\ref{final_alpha_2_presence}).
\State update $\left(\hat{\beta}_1^{(k)},\hat{\beta}_2^{(k)}\right)$ by $(\hat{\alpha}_1,\hat{\alpha}_2)$.
\If{$    \left(\hat{\beta}_1^{(k)}-\hat{\beta}_1^{(k-1)}\right)^2 +    \left(\hat{\beta}_2^{(k)}-\hat{\beta}_2^{(k-1)}\right)^2 < \delta,$ or $k > K_{\text{max}}$,} 
\quad convergence = True
\EndIf
\EndWhile\\
  \Return $\alpha_1=\hat{\beta}_1^{(k)}$, $\alpha_2=\hat{\beta}_2^{(k)}$.
\end{algorithmic}
\end{algorithm}

The proposed iterative algorithm for noisy channel estimation is presented in Algorithm \ref{algorithm}, in which index $k$ is used to denote the $k$-th iteration. This algorithm works as follows. As the starting point for the iterative algorithm, we choose  $\left(\hat{\beta_1}^{\left(0\right)},\hat{\beta_2}^{\left(0\right)}\right)$ as a random pair between $[-1,1]$. 
In each iteration, the user sends five pilots to the BS. Then, the BS estimates $\alpha_1$ and $\alpha_2$ based on the received pilots, as follows.
In the $k$-th iteration, similar to (\ref{alpha1_1_12_absence noise}) and (\ref{alpha1_1_34_absence noise}), four solutions for $\alpha_1$ can be obtained as 
\begin{align}
  \alpha_1^{(1)/(2)} =\hat{\beta_1}^{\left(k-1\right)} + \frac{\left|   \hat{y}\left(\hat{\beta_1}^{\left(k-1\right)}+v,\hat{\beta_2}^{\left(k-1\right)}\right)\right|}{\left|   \hat{y}\left(\hat{\beta_1}^{\left(k-1\right)}+v,\hat{\beta_2}^{\left(k-1\right)}\right)\right| \pm \left|   \hat{y}\left(\hat{\beta_1}^{\left(k-1\right)},\hat{\beta_2}^{\left(k-1\right)}\right)\right|}v, 
    \label{alpha1_1_12_presence noise}
\end{align}
\begin{align}
    \alpha_1^{(3)/(4)} =\hat{\beta_1}^{\left(k-1\right)} + \frac{\left|   \hat{y}\left(\hat{\beta_1}^{\left(k-1\right)}-v,\hat{\beta_2}^{\left(k-1\right)}\right)\right|}{-\left|   \hat{y}\left(\hat{\beta_1}^{\left(k-1\right)}-v,\hat{\beta_2}^{\left(k-1\right)}\right)\right| \pm \left|   \hat{y}\left(\hat{\beta_1}^{\left(k-1\right)},\hat{\beta_2}^{\left(k-1\right)}\right)\right|}v.
    \label{alpha1_1_34_presence noise}
\end{align}
Similar to (\ref{alpha_hat1}), we obtain $\alpha_1$ as the average of two answers with the minimum difference. 
 \begin{align}
   \hat{\alpha}_1 = \left\{\frac{\alpha_1^{(i)}+\alpha_1^{(j)}}{2}\mathrel{\Big|} \underset{i,j}{\text{min}}|\alpha_1^{(i)}-\alpha_1^{(j)}|; i\in\{1,2\}, j \in \{3,4\}\right\}.
     \label{final_alpha_1_presence}
\end{align}

Similarly, four answers for $\alpha_2$ can be obtained as 
\begin{align}
  & \alpha_2^{(1)/(2)} =\hat{\beta_2}^{\left(k-1\right)} + \frac{\left|   \hat{y}\left(\hat{\beta_1}^{\left(k-1\right)},\hat{\beta_2}^{\left(k-1\right)}+w\right)\right|}{\left|   \hat{y}\left(\hat{\beta_1}^{\left(k-1\right)},\hat{\beta_2}^{\left(k-1\right)}+w\right)\right| \pm \left|   \hat{y}\left(\hat{\beta_1}^{\left(k-1\right)},\hat{\beta_2}^{\left(k-1\right)}\right)\right|}w, 
    \label{alpha1_2_12_presence noise}\\
   & \alpha_2^{(3)/(4)} =\hat{\beta_2}^{\left(k-1\right)} + \frac{\left|   \hat{y}\left(\hat{\beta_1}^{\left(k-1\right)},\hat{\beta_2}^{\left(k-1\right)}-w\right)\right|}{-\left|   \hat{y}\left(\hat{\beta_1}^{\left(k-1\right)},\hat{\beta_2}^{\left(k-1\right)}-w\right)\right| \pm \left|   \hat{y}\left(\hat{\beta_1}^{\left(k-1\right)},\hat{\beta_2}^{\left(k-1\right)}\right)\right|}w.
    \label{alpha1_2_34_presence noise}
\end{align}
Then, $\hat{\alpha}_2$ can be obtained as 
 \begin{align}
    \hat{\alpha}_2 = \left\{\frac{\alpha_2^{(i)}+\alpha_2^{(j)}}{2}\mathrel{\Big|} \underset{i,j}{\text{min}}|\alpha_2^{(i)}-\alpha_2^{(j)}|; i\in\{1,2\}, j \in \{3,4\}\right\}.
     \label{final_alpha_2_presence}
\end{align}
Now, the process of updating $\left(\hat{\beta_1}^{(k)},\hat{\beta_2}^{(k)}\right)$ by $\left(\hat{\alpha}_1,\hat{\alpha}_2\right)$ continues until convergence occurs. The convergence criterion is checked at the end of each iteration by observing whether the following inequality holds
\begin{align}
    \left(\hat{\beta_1}^{(k)}-\hat{\beta_1}^{(k-1)}\right)^2 +    \left(\hat{\beta_2}^{(k)}-\hat{\beta_2}^{(k-1)}\right)^2 < \delta,
    \label{creterion}
\end{align}
where $\delta$ is a small threshold. If (\ref{creterion}) holds, the iteration stops and $\left(\hat{\beta_1}^{(k)},\hat{\beta_2}^{(k)}\right)$ are adopted as the correct estimations.


\subsection{Initial values for the Iterative Algorithm}
\begin{figure}
\centering
    \includegraphics[scale=0.7]{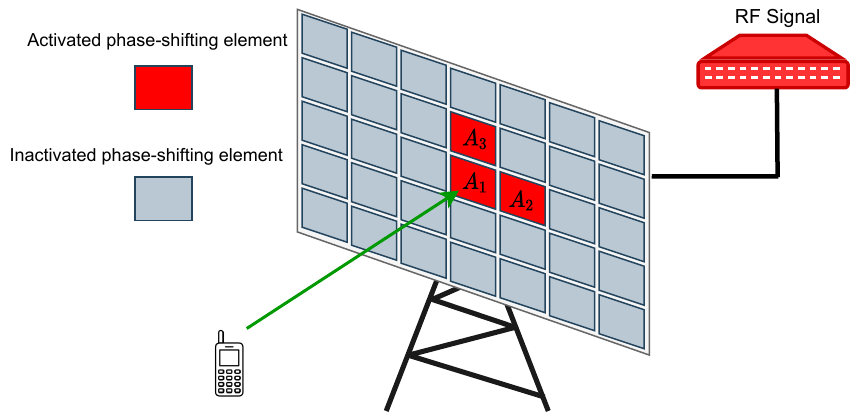}
\caption{\small{Three radiating elements placed at $(0,0)$, $\left(d_r,0\right)$, and $\left(0,d_r\right)$ are activated in three different time slot to provide initial values for the iterative algorithm.}}
\label{figure5}
\end{figure}

The performance of the proposed iterative algorithm depends on the initial values of $\hat{\beta}_1^{(0)}$ and $\hat{\beta}_2^{(0)}$.
When the initial values of the iterative algorithm are random, it means that the BS focuses the reception of the pilot signals from random directions. As a result, the received signal at the BS may be comparable to the noise, which degrades the performance of the iterative algorithm. On the other hand, when the initial values are close to $\alpha_1$ and $\alpha_2$, the iterative algorithm can accurately estimate $\alpha_1$ and $\alpha_2$.
In the following, we propose a simple method to provide the initial values of $\hat{\beta}_1^{(0)}$ and $\hat{\beta}_2^{(0)}$ that are close to $\alpha_1$ and $\alpha_2$, respectively.  

In order to provide initial values, the user sends three pilot signals in three different time slots. In each time slot, the BS activates only one phase-shifting element. In the $i$-th time slot, the BS activates the $A_i$-th element, where $A_1$, $A_2$, and $A_3$ are placed at $(0,0)$, $(d_r,0)$, and  $(0,d_r)$, respectively, see Fig. \ref{figure5}. 
 Let $d_i, i \in \{1,2,3\}$, denote the distance between the  user and the $A_i$-th activated element. The user sends the pilot $x_p = \sqrt{P_p}$ to the BS, where $P_p$ is the pilot transmit power.
 According to (\ref{ITSelement_user}), the received signal at the BS in the $i$-th time slot, denoted by $y_{i}$, is given by
\begin{align}
    y_{i} = \sqrt{P_p}\frac{\sqrt{F_i}\lambda}{4\pi d_i}e^{-jk_0d_i} + n_i, 
    \quad i\in\{1,2,3\},
    \label{system_equation}
\end{align}
where $F_i$ accounts for the power radiation pattern of the $A_i$-th activated element and the user's antenna and $n_i$ denotes the AWGN at the $i$-th time slot. Since these activated elements are close to each other, $F_i$ is the same for these elements, i.e., $F_i = F$.

From (\ref{fard_mxmy}), the distance between three activated elements and the user can be written as
\begin{align}
    &d_1 = d_0, \label{50}\\
    &d_2 \approx d_0 -d_r \alpha_{1},\label{51}\\
    &d_3 \approx d_0 -d_r \alpha_{2}.
    \label{52}
\end{align}
Substituting (\ref{50})-(\ref{52}) into (\ref{system_equation}) and using the phase of the received signal at the activated elements, we can estimate $\alpha_{1}$ and $\alpha_{2}$ as follows
\begin{align}
   &\hat{\alpha}_1 = \frac{\measuredangle y_{2}-\measuredangle y_{1}}{k_0 d_r}, \label{activealpha_1t}\\
   &\hat{\alpha}_2 = \frac{\measuredangle y_{3}-\measuredangle y_{1}}{k_0 d_r}, 
   \label{activealpha_2t}
\end{align}
where $\measuredangle y_i$ denotes the phase of $y_i$.
It is worth noting that due to the noise, $\hat{\alpha}_1$ and $\hat{\alpha}_2$ are not accurate enough. Since the BS with infinitely many antennas provides super-resolution beamforming, a small error in the estimation values can degrade the performance of the system during data transmission. Therefore, these estimated values can only provide good initial values for the iterative algorithm.

\subsection{Extension to Near-Field Regime}
According to the far-field condition in (\ref{farcondition}), the user lies in the far-field region if the distance between the BS and the user is larger than the Fraunhofer distance $d_F = \frac{2 D^2}{\lambda}$. 
When the distance between the BS and the user is shorter than $d_F$, the user lies in the near-field region of the aperture. In conventional wireless communication systems, due to the small size of apertures and using frequencies with centimeter wavelength, the Fraunhofer distance is usually several meters, for which far-field assumption typically holds in practice. On the other hand, in future mmWave and THz communications, due to the significant increase in the electric size of apertures and operation at higher frequencies, the near-field region can be up to several hundreds of meters. For example, for an aperture of size $L_x = L_y = 0.5 ~\rm{m}$ at carrier frequency of $30 ~\rm{GHz}$, any user located at a distance of shorter than $100~\rm{m}$ is considered to lie in the near-field region of the aperture. 

When the user lies in the far-field region of the aperture, as explained in Sec. \ref{nearfield}, we partition the aperture into tiles such that the far-field condition is hold for each tile. In fact, by partitioning the aperture into tiles in the near-field, the signal wavefront can be seen as a plane wave for each tile. Although the elevation and azimuth angles of these impinging plane waves from the user to the tiles are different, the difference between the unknown parameters $(\alpha_{1_{n_x n_y}},\alpha_{2_{n_x n_y}})$ for neighboring tiles is small. Following the above example, assume that the distance between the user and the center of the aperture is $d_0=30~\rm{m}$ and the elevation and azimuth angles are $\theta = 30^{\circ}$ and $\phi=60^{\circ}$. Since the user lies in the near-field region of the aperture, we partition the aperture into four tiles of size $0.25~\rm{m} \times 0.25~\rm{m}$ where the far-field distance for each tile becomes $d_0=25~\rm{m}$ and far-field assumption holds. 
Then, the unknown parameters $(\alpha_1,\alpha_2)$, for these tiles are $(0.253,0.436)$, $(0.254,0.429)$, $(0.246,0.437)$, and $(0.247,0.430)$. We exploit the proximity between these unknown parameters in the proposed channel estimation scheme for near-field to reduce the estimation overhead.



For the near-field region, we use (\ref{activealpha_1t}) and (\ref{activealpha_2t}) to provide initial values for the iterative algorithm. Then, we apply Algorithm \ref{algorithm} to the tile close at the center of the aperture to estimate the unknown parameters of the tile $(\alpha_1,\alpha_2)$. For each of the remaining tiles, since the unknown parameters of each tile is very close to the parameters of its neighbouring tiles, we consider the average of estimated values of the neighbouring tiles as the initial values. Then, we apply Algorithm \ref{algorithm} to estimate the unknown parameters of each tile. Due to the existence of good initial values, the number of iteration in Algorithm \ref{algorithm} decreases, and hence the overhead of the proposed scheme will be reduced. 
After estimating the unknown parameters of the tiles, we send one more pilot for each tile to obtain the phase of the channel between each tile and the user.
Finally, we use (\ref{phase_shift-cont}) to configure the phase-shifting elements of all tiles, in which $\beta_{0_{n_xn_y}}$ is the phase of the channel between $(n_x,n_y)$-th tile and the user. 


%


\section{Numerical Results}
\label{num_result}
In this section, we provide the numerical results to evaluate the performance of the proposed channel estimation algorithm.

\subsection{Settings of the Numerical Experiments}
In this subsection, we explain the parameter setup of the numerical experiments. The simulation results have been averaged over 1000 random channel realization. In each channel realizations, while the distance between the user and the center of the BS is fixed, the elevation and azimuth angles of the LoS path follow the uniform distribution, i.e., $\theta \sim U(0,\pi/2), \phi \sim U(0,2\pi)$. In addition to the LoS path, we assume that there are 4 NLoS paths due to scatters between the user and the BS. The elevation and azimuth angles of each NLoS path from those scatters to the center of aperture follow the uniform distribution. Moreover, we model the path coefficient of each NLoS path as a complex Gaussian random variable, i.e., $\mathcal{CN}(0,\,\sigma^{2})$,  where $\sigma^2$ is  $20$ $\si{dB}$ weaker than the power of the LoS component \cite{wang2021joint}. 
Unless otherwise specified, the system parameters for numerical experiments are listed in Table \ref{table}.

\begin{table}
	\renewcommand{\arraystretch}{1.1}
	\caption{A list of system parameters for numerical experiments.}
	\label{table1}
	\centering
	\begin{tabular}{c||c||c}
		\hline
		\bfseries Parameters & \bfseries Values & \bfseries Description\\
	    \hline 
		$f_c$ & $30~\rm{GHz}$ & Carrier frequency\\
	    \hline 
		$\lambda$ & $1~\rm{cm}$ & Wavelength\\
	    \hline
		$M_x$ & $257$ & Number of antenna along $x$ axis\\
	    \hline
		$M_y$ & $ 257 $ & Number of antenna along $y$ axis\\
		\hline 
		$d_r$ & $\lambda/4$ & Unit element spacing\\
		\hline 
		$L_e$ & $0.8 d_r$ & Width and length of each phase-shifting element\\
		\hline
		$P$ & $30~\rm{dBm}$ & Transmission power of the BS during data transmission \\
		\hline
		$N_0$ & $-115~\rm{dBm}$ & Noise power for $200~\rm{KHz}$\\
		\hline
	\end{tabular}
	\label{table}
\end{table}

\subsection{Performance Evaluation}
In this section, we present the numerical results for the proposed algorithm and make
comparisons with other channel estimation schemes. Three benchmark schemes are considered for comparison, including a hierarchical search scheme, a CS-based channel estimation scheme, and localization-based channel estimation. 

\emph{Benchmark Scheme 1:}
 In the hierarchical search scheme, proposed in \cite{xiao2016hierarchical}, closed-form expressions are provided to generate a codebook consisting of codewords with different beam widths. In this scheme, joint sub-array and deactivation approach is exploited to design a binary-tree codebook. The pilot overhead of this hierarchical scheme is given by $2 \log_2{(M)}$, where $M$ is the number of phase-shifting elements at the aperture.
 
\emph{Benchmark Scheme 2:} 
In the CS-based channel estimation scheme, proposed in \cite{tsai2018efficient}, the problem is formulated as a sparse signal recovery problem. Then, the problem is solved by the orthogonal matching pursuit algorithm employing non-uniformly quantized angle grids. The pilot overhead of this scheme is given by $\mathcal{O}\left(L\ln{\left(M\right)}\right)$, where $L$ is the number of non-zero spatial channel paths.

\emph{Benchmark Scheme 3:} 
In the near-field of the BS, the channel is characterized by the location of the user. Therefore, we consider a localization scheme to estimate the channel by the location parameters. In \cite{guidi2021radio}, the location of a single antenna transmitter in the near-field is retrieved from the incident spherical wavefront.

\begin{figure}
\centering
    \includegraphics[scale=0.5]{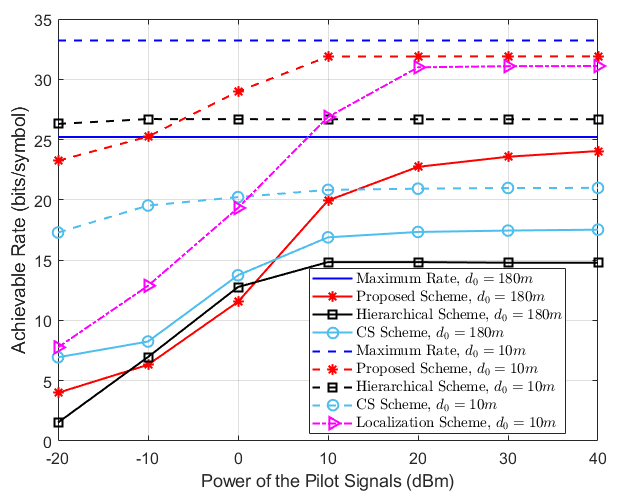}
\caption{\small{Achievable rate vs. the transmit power of the pilot signals of the proposed scheme and benchmark schemes for both far-field and near-field regions of the BS when the number of the pilot signals is fixed to 23.}}
\label{figure6}
\end{figure}

The BS uses the acquired CSI during the channel estimation period to maximize the received data rate by the user. Therefore, we consider the achieved data rate by the user using the acquired CSI as a performance metric.
The achieved data rate is calculated by
\begin{align}
  R = \log_2 \left( 1 + \frac{P_t\left|H\left(\bm{\beta}\right)\right|^2}{N_0}\right),
   \label{achieved rate}
\end{align}
where $N_0$ is the AWGN power, $P_t$ is the transmission power at the BS, and $\left|H\left(\bm{\beta}\right)\right|$ is the BS-user channel in (\ref{endtoend_exact_ITS}) which  depends on the phase configuration at the aperture. 
It is worthwhile to note that the imperfect CSI is used for the configuration of the elements of the aperture, whereas a high quality CSI of the scalar end-to-end channel (including beamforming at the transmitter) will be acquired at the user with almost perfect phase estimation to enable coherent communication.          

\begin{figure}
\centering
    \includegraphics[scale=0.5]{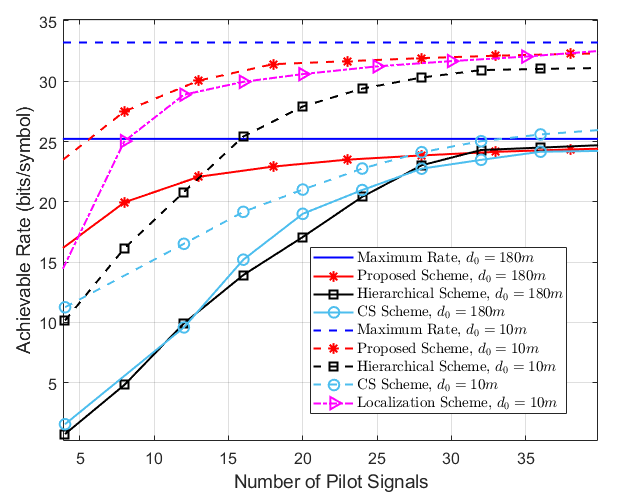}
\caption{\small{Achievable rate vs. the number of pilot signals of the proposed scheme and benchmark schemes for both far-filed and near-field regions of the BS when the transmit power of the pilot signals is fixed to $20~\rm{dBm}$.}}
\label{figure7}
\end{figure} 

Fig. \ref{figure6} illustrates the achieved data rate of the proposed scheme and the benchmark schemes as a function of the transmission power of the pilot signals. In addition, in this figure, we compare the performance of the proposed scheme with the benchmark schemes in both the far-field and near-field regions of the BS. 
In the far-field region of the BS, we set the maximum number of iteration of the proposed scheme to 4. For the near-field, we partitioned the aperture into 4 tiles. For each tile, we set the number of iteration to one.
Since each iteration requires five pilot signals and three pilot signals are required to provide the initial values, the maximum total number of pilot signals is 23. It can be
observed from Fig. \ref{figure6} that when the power of the pilot signals is low, the noise is comparable to the received pilot signals, and hence the proposed scheme cannot estimate the unknown parameters accurately. However, when the power of the pilot signal increases, the received pilot signals are much stronger than the noise, and hence the estimation error decreases. As illustrated in Fig. \ref{figure6}, the proposed
scheme, in general, achieves a significant gain over the other benchmark schemes since it exploits the specific structure of the radiated beam (the sinc function) in the LoS dominated channel model.

\begin{figure}
\centering
    \includegraphics[scale=0.5]{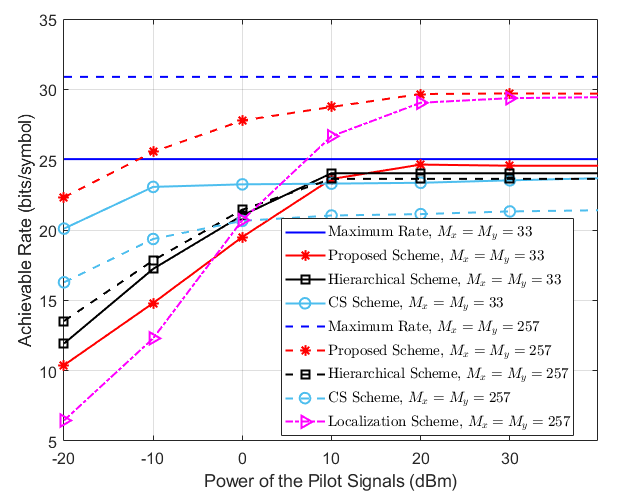}
\caption{\small{Achievable rate vs. the transmit power of the pilot signals of the proposed scheme and the benchmark schemes for different number of phase-shifting elements when the number of the pilot signals is fixed to 23.}}
\label{figure8}
\end{figure}
 
\begin{figure}
\centering
    \includegraphics[scale=0.5]{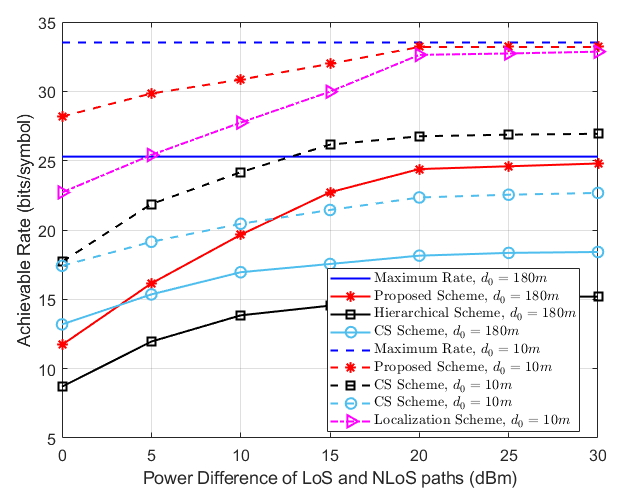}
\caption{\small{Achievable rate vs. power difference of LoS and NLoS paths of the proposed scheme and benchmark schemes for both far-field and near-field of the BS when the transmit power and the number of the pilot signals are fixed to $20~\rm{dBm}$ and 23, respectively.}}
\label{figure9}
\end{figure}

In Fig. \ref{figure7}, we show the achieved data rate of the proposed scheme and the benchmark schemes as a function of the number of pilot signals. In this figure, we fix the transmit power of the pilot signals to $20~\rm{dBm}$.  We observe from Fig. \ref{figure7} that the achieved data rate of the proposed and benchmark schemes increases with the number of pilot signals. In the far-field region of the BS, the proposed, hierarchical, and CS schemes can approximately achieve the maximum rate when the number of pilot signals is more than 30. In the near-field region of the BS, the achieved data rate of the CS scheme cannot increase more than a certain value due to the assumption of quantized values for $\alpha_1$ and $\alpha_2$. As illustrated in Fig. \ref{figure7}, the proposed scheme outperforms all benchmark schemes for the different number of pilot signals.

Fig. \ref{figure8} compares the achieved data rate by the proposed scheme with the benchmark schemes for two different numbers of phase-shifting elements at the aperture. Similar to Fig. \ref{figure6}, the number of pilot signals is fixed to 23.  Assuming perfect CSI, the achieved data rate has to increase with $M$. Nevertheless, the performance of the hierarchical scheme does not change with $M$. This is due to the fact that when $M$ increases, the signal beam width is narrower, and hence, more accurate estimations for $\alpha_1$ and $\alpha_2$ are required, which is not feasible with a low number of pilot signals. In addition, the performance of the CS scheme decreases with $M$ since the pilot overhead of the CS scheme is $\mathcal{O}\left(L\ln{\left(M\right)}\right)$. On the other hand, the achieved data rate of the proposed scheme increases with $M$ since accurate estimation for $\alpha_1$ and $\alpha_2$ can be obtained by the proposed scheme.

\begin{figure}
\centering
    \includegraphics[scale=0.5]{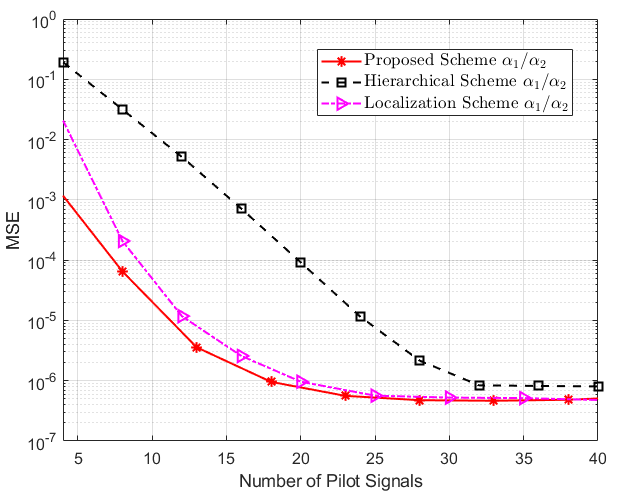}
\caption{\small{MSE vs. the number of pilot signals of the proposed scheme and the hierarchical scheme when the transmit power and the distance between the BS and the user are fixed to $20~\rm{dBm}$ and $25~\rm{m}$, respectively.}}
\label{figure10}
\end{figure}

\begin{figure}
\centering
    \includegraphics[scale=0.5]{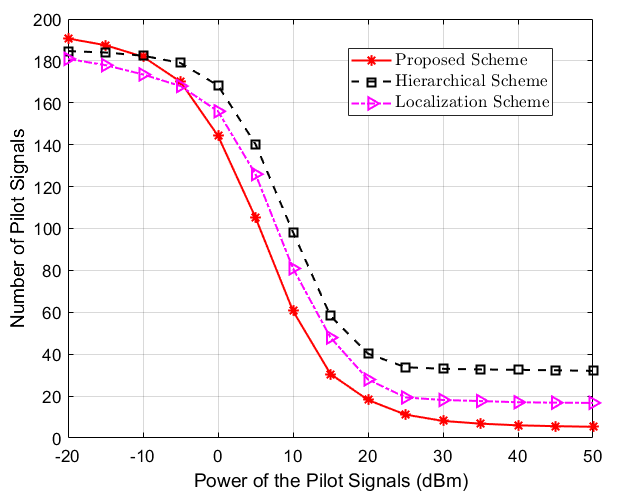}
\caption{\small{Number of pilot signals required to achieve the $\text{MSE}=10^{-6}$ vs. the transmit power of the pilot signals when the distance between the BS and the user are fixed to $25~\rm{m}$.}}
\label{figure11}
\end{figure}


Fig. \ref{figure9} illustrates the achieved data rate of the proposed scheme and the benchmark schemes as a function of the power difference of LoS and NLoS paths. The horizontal axis of this figure indicates that how much the power of NLoS path components is weaker than the power of the LoS component. In this figure, the transmit power and the number of the pilot signals are fixed to $20~\rm{dBm}$ and 23, respectively. In the proposed scheme, we consider the effect of NLoS path components as interference. As a result, when the power of NLoS path components decreases, the proposed scheme can estimate $\alpha_1$ and $\alpha_2$ more accurately. As shown in Fig. \ref{figure9}, the proposed scheme outperforms all benchmark schemes even when the power of NLoS path components is comparable to the power of LoS path component.

In Fig. \ref{figure10}, we compare the convergence behavior of the proposed scheme with the hierarchical and localization schemes. We consider the mean square error (MSE) to study the convergence behaviour, which is defined by $\text{MSE}_i= \mathbb{E}\{|\alpha_i-\hat{\alpha}_i|^2\}, \forall i\in\{1,2\}$. In this figure, we fix the transmit power and the distance between the BS and the user to $20~\rm{dBm}$ and $25~\rm{m}$, respectively. Since the CS scheme estimates the whole channel, not the AoA/AoD, we compare the proposed scheme only with the hierarchical and localization schemes. We observe from Fig. \ref{figure10} that when the number of pilot signals increases, the MSE of the proposed scheme and the benchmark schemes decreases.  In the hierarchical scheme, when the number of pilot signals increases, narrower beam widths are used to estimate $\alpha_1$ and $\alpha_2$ more accurately. However, since $M_x=M_y=257$, after $\lfloor \log_2{257} \rfloor= 8$ iteration, the MSE of the hierarchical scheme does not decrease. Fig. \ref{figure10} suggests that the proposed scheme and localization scheme can accurately estimate $\alpha_1$ and $\alpha_2$ when the number of pilot signals is more than 23 (4 iterations in Algorithm \ref{algorithm}).

In Fig. \ref{figure11}, we show the number of pilot signals required to achieve the desired MSE as a function of the transmission power of the pilot signals. In this figure, we fix the desired MSE and the distance between the BS and the user to $10^{-6}$ and $25~\rm{m}$, respectively. We observe from Fig. \ref{figure11} that the required number of pilot signals of the proposed and benchmark schemes decreases with the transmission power of the pilot signals. In addition, when the transmission power of the pilot signals is more than $-5~\rm{dBm}$, the proposed scheme needs less number of pilot signals compared to the benchmark scheme to achieve the $\text{MSE}=10^{-6}$. Fig. \ref{figure11} suggests that at a high signal to noise ratio (SNR) regime, the proposed scheme requires only five pilots to perfectly estimate the unknown channel parameters. This is because the proposed scheme exploits the specific structure of the radiated beams (two sinc functions) to estimate the unknown parameters.

\section{CONCLUSION}
In this paper, we proposed a channel estimation scheme for the holographic massive MIMO systems in the dominated LoS channel model. In the far-field region, we modeled the channel based on the path parameters of the system. We show that only two path parameters are required to be estimated to obtain the optimal phase shifts for all phase-shifting elements of the aperture. For the near-field, we first partitioned the aperture into tiles. Then, the channel was modeled as the superposition of the channels through the individual tiles. Moreover, only two parameters are required to be estimated for each tile. The proposed channel estimation scheme exploits the specific structure of the radiated beams (two sinc functions) to estimate the unknown parameters. 
The simulation results verified that the proposed scheme achieves significant performance gains over existing channel estimation schemes.

\section*{Appendix}
\subsection{Proof of Lemma \ref{lemma1}}
We start the proof by substituting the linear phase shift in (\ref{phase_shift_far}) into (\ref{endtoend_approx3}). We have
\begin{align}
   G^{(ff)}(\bm{\beta}) 
  \approx \left(\frac{ \sqrt{F}\lambda e^{-jk_0d_0}}{4\pi d_0}\right)   \sum_{m_x=-\frac{M_x-1}{2}}^{\frac{M_x-1}{2}}e^{jk_0d_r\Big(m_x\big(\alpha_1-\beta_1\big)\Big)} \!\!\sum_{m_y=-\frac{M_y-1}{2}}^{\frac{M_y-1}{2}} e^{jk_0d_r\Big(m_y\big(\alpha_2-\beta_2\big)\Big)}.
    \label{endtoend_approx3_app}
\end{align}
Using the sum of terms in a geometric progression, we can write 
\begin{align}
  \sum_{m=-\frac{M-1}{2}}^{\frac{M-1}{2}}e^{jma}=\frac{e^{-j\left(\frac{M-1}{2}\right)a}\left(1-e^{jMa}\right)}{1-e^{ja}}=
  \frac{\sin{\left(Ma/2\right)}}{\sin{\left(a/2\right)}}.
  \label{sum_geo}
\end{align}
 Now, using (\ref{sum_geo}) in (\ref{endtoend_approx3_app}), we have
 \begin{align}
    &G^{(ff)}(\bm{\beta})\approx \left(\frac{ \sqrt{F}\lambda e^{-jk_0d_0}}{4\pi d_0}\right)  \frac{\sin{\left(M_x\frac{k_0d_r}{2}\left(\alpha_1-\beta_1\right)\right)}}{\sin{\left(\frac{k_0d_r}{2}\left(\alpha_1-\beta_1\right)\right)}} \times \frac{\sin{\left(M_y\frac{k_0d_r}{2}\left(\alpha_2-\beta_2\right)\right)}}{\sin{\left(\frac{k_0d_r}{2}\left(\alpha_2-\beta_2\right)\right)}}.
    \label{ITS-user_1_app}
\end{align}
Substituting $M_x=L_x/d_r$ and $M_y=L_y/d_r$ into (\ref{ITS-user_1}) completes the proof.

	\bibliographystyle{IEEEtran}
	\end{document}